\newcommand{\ket}[1]{| #1 \rangle}
\newcommand{\ketbra}[2]{| #1 \rangle\langle #2 |}
\newcommand{\bb}[1]{\mathbb{#1}}
\newcommand{\cl}[1]{\mathcal{#1}}
\newcommand\Tr{\mathop{\rm Tr}\nolimits}
\newtheorem{theorem}{Theorem}
\def\Ddots{\mathinner{\mkern1mu\raise\p@
\vbox{\kern7\p@\hbox{.}}\mkern2mu
\raise4\p@\hbox{.}\mkern2mu\raise7\p@\hbox{.}\mkern1mu}}
\begin{document}


\title{Non-Positive Partial Transpose Subspaces Can be as Large as Any Entangled Subspace}

\author{Nathaniel Johnston}%
\affiliation{Institute for Quantum Computing, University of Waterloo, Waterloo, Ontario, Canada}%

\begin{abstract}
	It is known that, in an $(m \otimes n)$-dimensional quantum system, the maximum dimension of a subspace that contains only entangled states is $(m-1)(n-1)$. We show that the exact same bound is tight if we require the stronger condition that every state with range in the subspace has non-positive partial transpose. As an immediate corollary of our result, we solve an open question that asks for the maximum number of negative eigenvalues of the partial transpose of a quantum state. In particular, we give an explicit method of construction of a bipartite state whose partial transpose has $(m-1)(n-1)$ negative eigenvalues, which is necessarily maximal, despite recent numerical evidence that suggested such states may not exist for large $m$ and $n$.
\end{abstract}


\pacs{03.67.Bg, 03.67.Mn, 02.10.Yn}

\maketitle

In quantum information theory, a pure state $\ket{v} \in \mathbb{C}^m \otimes \mathbb{C}^n$ is called a \emph{product state} if it can be written in the form $\ket{v} = \ket{v_1} \otimes \ket{v_2}$ for some $\ket{v_1} \in \mathbb{C}^m$ and $\ket{v_2} \in \mathbb{C}^n$; otherwise it is called \emph{entangled}. Similarly, a mixed state $\rho \in M_m \otimes M_n$ is called \emph{separable} if it can be written in the form $\rho = \sum_i p_i \ketbra{v_i}{v_i}$ for some real constants $p_i > 0$ with $\sum_i p_i = 1$ and product states $\ket{v_i}$; otherwise it is called \emph{entangled}.

The problem of determining whether or not a given mixed state is entangled is one of the central questions in quantum information theory, and it is expected that no efficient procedure for answering this question in full generality exists \cite{G03,G10}. However, there are many known one-sided tests that can be used to prove that a given state is entangled. The most well-known such test is the \emph{positive partial transpose (PPT)} criterion \cite{P96}, which says that if $\rho$ is separable then $\rho^\Gamma$ is positive semidefinite, where $\Gamma$ refers to the linear \emph{partial transposition} map that sends $\ketbra{i}{j} \otimes \ketbra{k}{\ell}$ to $\ketbra{i}{j} \otimes \ketbra{\ell}{k}$ (i.e., it is the map $id_m \otimes T : M_m \otimes M_n \rightarrow M_m \otimes M_n$, where $id_m$ is the identity map and $T$ is the usual transpose map with respect to the standard basis $\{\ket{i}\}$). If $\rho^\Gamma$ is positive semidefinite, it is said that $\rho$ is \emph{positive partial transpose (PPT)}; otherwise, it is called \emph{non-positive partial transpose (NPT)}.

It is known that $(m-1)(n-1)$ is the maximum dimension of a subspace $\cl{S} \subseteq \mathbb{C}^m \otimes \mathbb{C}^n$ such that every $\ket{v} \in \cl{S}$ is entangled \cite{Wal02,Par04}. In the present paper, we consider the related problem of finding the maximum dimension of a subspace $\cl{S}$ such that every $\rho \in M_m \otimes M_n$ with range contained in $\cl{S}$ is NPT. Since all NPT states are entangled, it follows immediately that no such subspace of dimension greater than $(m-1)(n-1)$ exists. Our main result shows that this bound is tight for all $m$ and $n$. That is, there exists a subspace of dimension $(m-1)(n-1)$ that is not only entangled, but is even NPT. This result is perhaps surprising, since most results concerning the relationship between the PPT criterion and separability show that these two properties become more distant from each other as $m$ and $n$ increase. For example, the converse of the PPT criterion (that is, the statement that if $\rho$ is entangled then it is NPT) only holds when $mn \leq 6$ \cite{HHH96}, the set of PPT states is much larger than the set of separable states in general \cite{AS06,Ye09}, and PPT states can be very far from the set of separable states when $m$ and $n$ are large \cite{BS10}.

As an important consequence of our result, we resolve completely the question of how many negative eigenvalues the partial transpose of a bipartite mixed state can have. This question is motivated by the facts that some measures of entanglement are defined in terms of the negative eigenvalues of $\rho^\Gamma$ \cite{ZHSL98,VW02} and that bounds on the number of negative eigenvalues of $\rho^\Gamma$ have recently been used to show that squared negativity is not a lower bound of geometric discord \cite{RP12}. It is known \cite{Ran13} that the partial transpose of a state can not have more than $(m-1)(n-1)$ negative eigenvalues. However, tightness has only been shown when $\min\{m,n\} \leq 2$ \cite{CD12} or $m = n = 3$, and recent numerical evidence \cite{JJCC08,Ran13} suggested that this bound may not be tight for larger values of $m$ and $n$. We show, via explicit construction, that the contrary is true; for all $m,n$ there exists $\rho \in M_m \otimes M_n$ such that $\rho^\Gamma$ has $(m-1)(n-1)$ negative eigenvalues.

We now present our main result, which shows that, for any entangled subspace, there exists an NPT subspace of the same dimension. Our proof is by explicit construction, and builds upon the ideas presented in the proof of \cite[Proposition~10]{CW08}.
\begin{theorem}\label{thm:nppt_subspace}
	There exists a subspace $\mathcal{S} \subseteq \mathbb{C}^m \otimes \mathbb{C}^n$ of dimension $(m-1)(n-1)$ such that every density matrix with range contained in $\mathcal{S}$ is NPT.
\end{theorem}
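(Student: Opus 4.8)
The plan is to prove Theorem~\ref{thm:nppt_subspace} by producing, for every $m$ and $n$, an explicit pair $(\mathcal{S},\tau)$: a subspace $\mathcal{S}\subseteq\mathbb{C}^m\otimes\mathbb{C}^n$ of dimension $(m-1)(n-1)$, together with a positive semidefinite \emph{witness} $\tau\in M_m\otimes M_n$ whose partial transpose is negative definite on $\mathcal{S}$, i.e.\ $\bra{v}\tau^\Gamma\ket{v}<0$ for every nonzero $\ket{v}\in\mathcal{S}$ (in fact one may simply take $\mathcal{S}$ to be the span of the negative eigenvectors of $\tau^\Gamma$). Given such a pair the theorem is immediate: any nonzero density matrix $\rho$ with range contained in $\mathcal{S}$ is a positive combination $\rho=\sum_k p_k\ketbra{v_k}{v_k}$ of projections onto vectors $\ket{v_k}\in\mathcal{S}$, so $\Tr(\rho\,\tau^\Gamma)=\sum_k p_k\bra{v_k}\tau^\Gamma\ket{v_k}<0$, hence $\Tr(\rho^\Gamma\tau)=\Tr(\rho\,\tau^\Gamma)<0$; since $\tau\geq 0$, this is impossible unless $\rho^\Gamma\not\geq 0$, i.e.\ $\rho$ is NPT.

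For the construction I would start from a known $(m-1)(n-1)$-dimensional completely entangled subspace (e.g.\ the Parthasarathy subspace $\{A:\sum_{i+j=k}A_{ij}=0\text{ for all }k\}$ under the identification $\mathbb{C}^m\otimes\mathbb{C}^n\cong M_{m\times n}$), which handles \emph{pure} states in the subspace for free but says nothing about mixtures. Following the idea in the proof of \cite[Proposition~10]{CW08}, I would then retune this construction --- for instance by replacing the uniform anti-diagonal sums by suitably weighted ones, or by applying an invertible local deformation $X\otimes Y$ --- to obtain a candidate $\mathcal{S}$, and simultaneously write down a candidate $\tau$ adapted to it (the natural guesses being a suitably weighted mixture built from the generators of $\mathcal{S}^\perp$, or a deformed maximally entangled state, chosen so that $\tau^\Gamma$ has a recognizable block or Hankel structure in the $\{\ket{i}\otimes\ket{j}\}$ basis). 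One then has to verify three things: that $\dim\mathcal{S}=(m-1)(n-1)$; that $\tau\geq 0$; and that the Hermitian quadratic form $\ket{v}\mapsto\bra{v}\tau^\Gamma\ket{v}$ is strictly negative on all of $\mathcal{S}$.

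I expect the last point --- equivalently, that $\tau^\Gamma$ has at least $(m-1)(n-1)$ negative eigenvalues, with $\mathcal{S}$ sitting inside the corresponding eigenspace --- to be the crux, for two reasons. First, the target $(m-1)(n-1)$ is exactly the maximal dimension of any entangled subspace, so there is no slack and the construction must be calibrated precisely: the most naive choice $\tau^\Gamma=I-2P_{\mathcal{S}}$ (which equals $-I$ on $\mathcal{S}$) already fails the requirement $\tau=I-2P_{\mathcal{S}}^{\Gamma}\geq 0$ once $mn$ is moderately large, since $\Tr P_{\mathcal{S}}^{\Gamma}=(m-1)(n-1)$ exceeds $mn/2$ (already for $m=n=4$), so the eigenvalues of $P_{\mathcal{S}}^{\Gamma}$ cannot all lie below $1/2$. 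Second, the conditions $\tau\geq 0$ and ``$\tau^\Gamma$ very negative on $\mathcal{S}$'' pull in opposite directions, so the working construction should sit on the boundary of the positive cone --- as in the smallest case, where $\mathcal{S}=\mathrm{span}\{\ket{0}\ket{1}-\ket{1}\ket{0}\}\subseteq\mathbb{C}^2\otimes\mathbb{C}^2$, $\tau=\ketbra{\Omega}{\Omega}$ with $\ket{\Omega}=\sum_i\ket{i}\ket{i}$, and $\tau^\Gamma$ is the swap operator, which is $-1$ on $\mathcal{S}$. As a bonus, the $\tau$ delivered by this argument is itself a state whose partial transpose has $(m-1)(n-1)$ negative eigenvalues, so the eigenvalue-counting question discussed in the introduction is settled along the way.
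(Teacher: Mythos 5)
Your reduction is sound as far as it goes: if you could exhibit a positive semidefinite $\tau$ with $\bra{v}\tau^\Gamma\ket{v}<0$ for every nonzero $\ket{v}\in\cl{S}$, then $\Tr(\rho^\Gamma\tau)=\Tr(\rho\,\tau^\Gamma)<0$ would indeed force every density matrix with range in $\cl{S}$ to be NPT, and your observation that the naive choice $\tau^\Gamma=I-2P_{\cl{S}}$ violates $\tau\geq 0$ is a correct sanity check. The genuine gap is that you never construct $\tau$: you list candidate shapes (``a weighted mixture built from generators of $\cl{S}^\perp$, or a deformed maximally entangled state'') and defer verification of exactly the property you yourself identify as the crux. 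That crux is the whole theorem. Worse, the object you require --- a state whose partial transpose has $(m-1)(n-1)$ negative eigenvalues, with $\cl{S}$ inside the negative eigenspace --- is precisely the open question addressed by the paper's second theorem, which the paper resolves only \emph{as a corollary of} Theorem~\ref{thm:nppt_subspace}: it obtains such a witness by combining the theorem with compactness (to get $\Tr(P\sigma)\leq c<1$ for all PPT $\sigma$) and the dual-cone decomposition $I-\tfrac{1}{c}P=X_1+X_2^\Gamma$, and its explicit $M_3\otimes M_4$ example is found by semidefinite programming, not a closed formula. Prior numerical evidence had even suggested such states might not exist for larger $m,n$. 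So your plan reduces the theorem to a statement that is at least as hard and was not known to be true, and the circular route back through the paper's duality argument is unavailable to you since it presupposes Theorem~\ref{thm:nppt_subspace}.

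The paper's actual proof needs no global witness. It takes $\cl{S}$ to be essentially the subspace you name (anti-diagonal sums all zero under $\Delta:\ket{i}\otimes\ket{j}\mapsto\ketbra{j}{i}$) and, for an arbitrary decomposition $\rho=\sum_i p_i\ketbra{v_i}{v_i}$, exhibits a single $2\times 2$ \emph{principal} submatrix of $\rho^\Gamma$ with determinant $-\big|\sum_i p_i\overline{a_i}b_i\big|^2<0$. The key combinatorial input is that no anti-diagonal of any $\Delta(\ket{v_i})$ has exactly one nonzero entry, plus a telescoping argument on the leftmost nonvanishing anti-diagonal (using that it sums to zero) to guarantee $\sum_i p_i\overline{a_i}b_i\neq 0$ for a suitable choice of the two positions. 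To salvage your approach you would have to write down $\tau$ explicitly for all $m,n$ and prove both $\tau\geq 0$ and strict negativity of $\tau^\Gamma$ on $\cl{S}$; nothing in your proposal accomplishes this.
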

\begin{proof}
	We begin by defining the subspace that we will prove has the desired property:
	\begin{align*}
		\mathcal{S} := {\rm span}\big\{ \ket{j}\ket{k+1} - \ket{j+1}\ket{k} : \ & 0 \leq j \leq m-2, \\
		                                                                        & 0 \leq k \leq n-2 \big\}.
	\end{align*}
	It is clear that $\cl{S}$ has dimension $(m-1)(n-1)$, so we now focus on the NPT condition. To this end, we first let $\Delta : \bb{C}^m \otimes \bb{C}^n \rightarrow M_{n,m}$ be the linear isomorphism that sends $\ket{i} \otimes \ket{j}$ to the $n \times m$ matrix $\ketbra{j}{i}$. Then $\Delta(\cl{S})$ is easily seen to equal the set of $n \times m$ matrices with the property that each of their $n + m - 1$ anti-diagonals sum to $0$.
	
	Now let $\rho \in M_m \otimes M_n$ be such that its range is contained in $\mathcal{S}$. Then we can write
	\begin{align*}
		\rho = \sum_i p_i\ketbra{v_i}{v_i}
	\end{align*}
	for some constants $p_i > 0$ with $\sum_i p_1 = 1$ and some pure states $\ket{v_i} \in \mathcal{S}$. Let $j_0<j_1,k_0<k_1$ be such that, for all $i$, the $2 \times 2$ submatrix of $\Delta(\ket{v_i})$ corresponding to rows $\ket{j_0}$ and $\ket{j_1}$ and columns $\ket{k_0}$ and $\ket{k_1}$ is of the form
	\begin{align*}
		\begin{bmatrix}0 & b_i \\ a_i & c_i\end{bmatrix},
	\end{align*}
	where the constants $\{a_i\}$ and $\{b_i\}$ satisfy $\sum_i p_i\overline{a_i}b_i \neq 0$ (the existence of such a $2 \times 2$ submatrix is not obvious -- we defer the proof of its existence to the end of the proof of this theorem, since it is slightly technical).
	
	A direct calculation now reveals that the $2 \times 2$ principle submatrix of $(\ketbra{v_i}{v_i})^\Gamma$ corresponding to rows and columns $\ket{j_0}\ket{k_0}$ and $\ket{j_1}\ket{k_1}$ is
	\begin{align*}
		\begin{bmatrix}0 & a_i\overline{b_i} \\ \overline{a_i}b_i & |c_i|^2\end{bmatrix},
	\end{align*}
	so the same principle submatrix of $\rho^\Gamma$ is
	\begin{align*}
		\sum_i p_i \begin{bmatrix}0 & a_i\overline{b_i} \\ \overline{a_i}b_i & |c_i|^2\end{bmatrix}.
	\end{align*}
	Since the determinant of this principle submatrix is
	\begin{align*}
		-\Big| \sum_i p_i\overline{a_i}b_i \Big|^2 < 0,
	\end{align*}
	it follows that it has a negative eigenvalue, so $\rho^\Gamma$ has a negative eigenvalue as well, as desired.
	
	All that remains is to prove that there exist $j_0<j_1,k_0<k_1$ such that the $2 \times 2$ submatrix of $\Delta(\ket{v_i})$ corresponding to rows $\ket{j_0},\ket{j_1}$ and columns $\ket{k_0},\ket{k_1}$ equals
	\begin{align*}
		\begin{bmatrix}0 & b_i \\ a_i & c_i\end{bmatrix}
	\end{align*}
	and $\sum_i p_i\overline{a_i}b_i \neq 0$. It follows from the construction of $\cl{S}$ that no anti-diagonal of $\Delta(\ket{v_i})$ has exactly $1$ non-zero entry (they all have either $0$ or $2$ or more non-zero entries). It is then clear that a $2 \times 2$ submatrix can be found with top-left entry equal to $0$ for all $i$ -- simply choose the leftmost anti-diagonal that is non-zero in at least one of the $\Delta(\ket{v_i})$'s and choose any $2 \times 2$ submatrix containing $2$ non-zero entries on that anti-diagonal.
	
	This submatrix will, by construction, have $\overline{a_i}b_i \neq 0$ for at least one choice of $i$. However, it could still happen that $\sum_i p_i\overline{a_i}b_i = 0$ by terms negating each other in the summation. To see that this problem can always be avoided, write
	\begin{align*}
		\Delta(\ket{v_i}) = \begin{bmatrix}0 & \cdots & 0 & 0 & \cdots & 0 & d_{i,L} & * & \cdots & * \\
		0 & \cdots & 0 & 0 & \cdots & d_{i,L-1} & * & * & \cdots & * \\ 
		\vdots & \Ddots & \vdots & \vdots & \Ddots & \vdots & \vdots & \vdots & \Ddots & \vdots \\ 
		0 & \cdots & 0 & d_{i,2} & \cdots & * & * & * & \cdots & * \\ 
		0 & \cdots & d_{i,1} & * & \cdots & * & * & * & \cdots & *\end{bmatrix},
	\end{align*}
	where at least one value of $d_{i,j}$ is non-zero and $*$ indicates an entry whose value is irrelevant to us. That is, we define $L$ to be the length of the leftmost anti-diagonal that is non-zero in at least one $\Delta(\ket{v_i})$, and we define $d_{i,1},d_{i,2},\ldots,d_{i,L}$ to be the entries of this anti-diagonal. Let $j_0$ be the smallest integer for which $d_{i,j_0} \neq 0$ for some $i$ and define $a_i := d_{i,j_0}$ for all $i$ (i.e., the bottom-left corner of the $2 \times 2$ submatrix that we are choosing is the lowest-left entry in this anti-diagonal that is non-zero for some $i$).
	
	By using the fact that each anti-diagonal of $\Delta(\ket{v_i})$ sums to $0$, we have
	\begin{align*}
		d_{i,L} = -\sum_{j=j_0}^{L-1} d_{i,j} \quad \forall \, i.
	\end{align*}
	Now suppose for a contradiction that $\sum_i p_i\overline{a_i}d_{i,j} = 0$ for all $j > j_0$. By using this assumption twice, we see that
	\begin{align*}
		0 & = \sum_i p_i\overline{a_i}d_{i,L} \\
		  & = \sum_i p_i\overline{a_i}\Big(-\sum_{j=j_0}^{L-1} d_{i,j}\Big) \\
		  & = -\sum_{j=j_0}^{L-1} \Big(\sum_i p_i\overline{a_i}d_{i,j}\Big) \\
		  & = -\sum_i p_i|a_i|^2 - \sum_{j=j_0+1}^{L-1} \Big(\sum_i p_i\overline{a_i}d_{i,j}\Big) \\
		  & = -\sum_i p_i|a_i|^2 \\
		  & < 0,
	\end{align*}
	which is the desired contradiction. It follows that there exists $j_1 > j_0$ such that $\sum_i p_i\overline{a_i}d_{i,j_1} \neq 0$, so we define $b_i := d_{i,j_1}$ for all $i$ (i.e., we choose the top-right corner of the $2 \times 2$ submatrix to be the $j_1$-th entry of the anti-diagonal we are working with). Since we have found a $2 \times 2$ submatrix for which $\sum_i p_i\overline{a_i}b_i \neq 0$, the proof is complete.
\end{proof}

We now turn our attention to the number of negative eigenvalues of the partial transpose of a state. It has been shown that for all $\rho \in M_m \otimes M_n$, $\rho^\Gamma$ can not have more than $(m-1)(n-1)$ negative eigenvalues \cite{Ran13} (also see \cite[Corollary~5.4]{JK10} for an earlier, but less direct proof). However, tightness of this bound has not been known -- for example, when $m = 3$ and $n = 4$, numerical evidence was presented in \cite{Ran13} that suggested that the maximum number of negative eigenvalues of $\rho^\Gamma$ might be $5$, not $(m-1)(n-1) = 6$.

We now use Theorem~\ref{thm:nppt_subspace} to show that the bound $(m-1)(n-1)$ is in fact tight for all $m$ and $n$. For the most part, the proof is elementary, but it does require some familiarity with \emph{dual cones}. Given a set of Hermitian matrices $\cl{C} \subseteq M_m \otimes M_n$, the dual cone of $\cl{C}$ is
\begin{align*}
	\cl{C}^\circ := \big\{ Y = Y^\dagger \in M_m \otimes M_n : \Tr(XY) \geq 0 \ \ \forall \, X \in \cl{C} \big\}.
\end{align*}
In particular, the set of positive semidefinite matrices is its own dual cone. More generally, dual cones can be defined on any real Hilbert space, but the definition given here suffices for our purposes. For basic properties of dual cones, the reader is directed to \cite{BV04}.
\begin{theorem}\label{thm:neg_evals}
	For all $m$ and $n$, there exists $\rho \in M_m \otimes M_n$ such that $\rho^\Gamma$ has $(m-1)(n-1)$ negative eigenvalues.
\end{theorem}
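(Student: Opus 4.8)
The plan is to invoke Theorem~\ref{thm:nppt_subspace} to fix an NPT subspace $\mathcal{S} \subseteq \mathbb{C}^m \otimes \mathbb{C}^n$ of dimension $d := (m-1)(n-1)$, and then to construct a density matrix $\rho$ with range contained in $\mathcal{S}$ whose partial transpose has at least $d$ negative eigenvalues; since $\rho^\Gamma$ can never have more than $d$ negative eigenvalues \cite{Ran13}, it will then have exactly $d$, which is what we want. The first move is a reformulation: $\rho^\Gamma$ has at least $d$ negative eigenvalues if and only if there is a $d$-dimensional subspace $\mathcal{R}$ on which $\rho^\Gamma$ is negative definite (the ``if'' direction being Cauchy interlacing applied to the compression $P_\mathcal{R}\, \rho^\Gamma P_\mathcal{R}$). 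So it is enough to produce a single $d$-dimensional subspace $\mathcal{R}$ together with a density matrix $\rho$ supported on $\mathcal{S}$ satisfying $P_\mathcal{R}\, \rho^\Gamma P_\mathcal{R} \prec 0$.

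With $\mathcal{R}$ fixed, whether such a $\rho$ exists is a convex-feasibility question, and this is where dual cones enter. Consider the convex cone $\mathcal{E} := \big\{ P_\mathcal{R}\, \rho^\Gamma P_\mathcal{R} : \rho \geq 0,\ \range(\rho) \subseteq \mathcal{S} \big\}$ of Hermitian operators on $\mathcal{R}$. It fails to contain a negative definite operator if and only if it can be separated by a hyperplane from the open cone of negative definite operators on $\mathcal{R}$; such a separating functional is a nonzero positive semidefinite $\sigma$ supported on $\mathcal{R}$ with $\Tr\big(\sigma\, P_\mathcal{R}\rho^\Gamma P_\mathcal{R}\big) \geq 0$ for all admissible $\rho$. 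Using the identities $\Tr\big(\sigma\, P_\mathcal{R}\rho^\Gamma P_\mathcal{R}\big) = \Tr(\sigma \rho^\Gamma) = \Tr(\sigma^\Gamma \rho)$ and the fact that the cone $\{\rho \geq 0 : \range(\rho) \subseteq \mathcal{S}\}$ has dual cone $\{Z = Z^\dagger : P_\mathcal{S} Z P_\mathcal{S} \geq 0\}$, this condition on $\sigma$ collapses to $P_\mathcal{S}\, \sigma^\Gamma P_\mathcal{S} \geq 0$. Conclusion: a suitable $\rho$ exists if and only if every nonzero positive semidefinite $\sigma$ with range in $\mathcal{R}$ has $P_\mathcal{S}\, \sigma^\Gamma P_\mathcal{S}$ possessing a negative eigenvalue.

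For the subspace I would choose
\[
	\mathcal{R} := \mathrm{span}\big\{ \ket{j}\ket{k} + \ket{j+1}\ket{k+1} : 0 \leq j \leq m-2,\ 0 \leq k \leq n-2 \big\},
\]
which a short induction shows has dimension exactly $d$. This choice is dictated by the proof of Theorem~\ref{thm:nppt_subspace}: for the spanning vector $\ket{v} = \ket{j}\ket{k+1} - \ket{j+1}\ket{k}$ of $\mathcal{S}$, the partial transpose $(\ketbra{v}{v})^\Gamma$ has its unique negative eigenvalue on $\ket{j}\ket{k} + \ket{j+1}\ket{k+1}$, so $\mathcal{R}$ is precisely the subspace where the negative part of $\rho^\Gamma$ ought to live.

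The remaining, and I expect hardest, step is to verify the dual condition for this $\mathcal{R}$: that every nonzero positive semidefinite $\sigma = \sum_i q_i \ketbra{u_i}{u_i}$ with $\ket{u_i} \in \mathcal{R}$ admits some $\ket{w} \in \mathcal{S}$ with $\bra{w}\sigma^\Gamma\ket{w} < 0$. I would prove this by mirroring the combinatorial argument of Theorem~\ref{thm:nppt_subspace}. Passing to coefficient matrices $U_i$ (of $\ket{u_i}$) and $W$ (of $\ket{w}$), a direct computation gives $\bra{w}\sigma^\Gamma\ket{w} = \sum_i q_i\big( \|\mathrm{sym}(U_i W^T)\|_F^2 - \|\mathrm{antisym}(U_i W^T)\|_F^2 \big)$; and, since the vectors of $\mathcal{R}$ are exactly those whose $\Delta$-image has vanishing alternating sums along every diagonal, one can take $\ket{w}$ to be $\ket{j_0}\ket{k_0+1} - \ket{j_0+1}\ket{k_0}$ for a suitable position $(j_0,k_0)$ on an extreme diagonal (or a suitable linear combination of such vectors), reducing the claim to showing that a certain weighted sum of cross terms cannot vanish --- the exact analogue of the telescoping step that ends the proof of Theorem~\ref{thm:nppt_subspace} with $-\sum_i p_i|a_i|^2 < 0$. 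Once this is established, the duality of the second paragraph supplies the density matrix $\rho$, and Cauchy interlacing together with \cite{Ran13} completes the proof.
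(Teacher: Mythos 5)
Your interlacing reduction and the separation argument identifying the obstruction as a nonzero positive semidefinite $\sigma$ with range in $\mathcal{R}$ satisfying $P_\mathcal{S}\sigma^\Gamma P_\mathcal{S} \geq 0$ are both sound, and your choice of $\mathcal{R}$ is well motivated. But there is a genuine gap: the step you yourself call the hardest --- showing that no such $\sigma$ exists for your $\mathcal{R}$ --- is only sketched, and it is not a routine transcription of the telescoping argument from Theorem~\ref{thm:nppt_subspace}. That argument shows that $\sigma^\Gamma$ has \emph{some} negative eigenvalue when $\sigma$ is supported on $\mathcal{S}$; what you need is the much stronger statement that for every $\sigma$ supported on $\mathcal{R}$ the \emph{compression} $P_\mathcal{S}\sigma^\Gamma P_\mathcal{S}$ fails to be positive semidefinite, i.e.\ that a negative direction of $\sigma^\Gamma$ can always be found \emph{inside} the fixed subspace $\mathcal{S}$. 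Equivalently, your strategy commits you to producing a state with range contained in $\mathcal{S}$ whose partial transpose already has $(m-1)(n-1)$ negative eigenvalues --- a claim strictly stronger than the theorem, which you have not established and which does not follow from Theorem~\ref{thm:nppt_subspace}. As it stands, the proof is incomplete at its crux.

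It is worth contrasting this with the paper's proof, which sidesteps the difficulty by never requiring $\rho$ to be supported on $\mathcal{S}$. Since $\mathcal{S}$ is NPT and the set of PPT states is compact, there is $0<c<1$ with $\Tr(P\sigma)\leq c$ for every PPT state $\sigma$, where $P$ is the projection onto $\mathcal{S}$; hence $X:=I-\tfrac{1}{c}P$ has exactly $(m-1)(n-1)$ negative eigenvalues and lies in the dual cone of the PPT states. That dual cone equals $\{Y_1+Y_2^\Gamma : Y_1,Y_2\geq 0\}$, so $X=X_1+X_2^\Gamma$ with $X_1,X_2\geq 0$, and $\rho:=X_2/\Tr(X_2)$ satisfies $\Tr(X_2)\,\rho^\Gamma=X-X_1\leq X$; Weyl monotonicity then forces $\rho^\Gamma$ to have at least (hence exactly) $(m-1)(n-1)$ negative eigenvalues. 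To salvage your route you would either have to prove the compression claim for your $\mathcal{R}$ --- a new and apparently harder combinatorial lemma --- or drop the support constraint and dualize against the PPT states as the paper does.
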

\begin{proof}
	To prove the statement, we explicitly construct $\rho \in M_m \otimes M_n$ with the desired property. Let $P \in M_m \otimes M_n$ be the orthogonal projection onto the $(m-1)(n-1)$-dimensional NPT subspace described by Theorem~\ref{thm:nppt_subspace}. Since there is no PPT state $\sigma$ with $P\sigma P = \sigma$, we have $\Tr(P\sigma) < 1$ for all PPT $\sigma$. Furthermore, since the set of PPT states is compact, there exists a real constant $0 < c < 1$ such that $\Tr(P\sigma) \leq c$ for all PPT $\sigma$. If we define the operator $X := I - \tfrac{1}{c}P$ then it is easily verified that $X$ has $(m-1)(n-1)$ negative eigenvalues and $\Tr(X\sigma) \geq 0$ for all PPT $\sigma$. The latter fact is equivalent to the statement that $X$ is in the dual cone of the set of PPT states. This dual cone is easily seen to equal
	\begin{align*}
		\big\{ Y_1 + Y_2^\Gamma : Y_1,Y_2 \in M_m \otimes M_n \text{ are positive semidefinite} \big\}
	\end{align*}
	(an explicit proof of this fact is given by \cite[Corollary~3.7]{SSZ09}). Thus there exist positive semidefinite $X_1,X_2$ such that $X = X_1 + X_2^\Gamma$. Finally, we define $\rho := X_2/\Tr(X_2)$, which is a valid density matrix by definition. Furthermore, since $X$ has $(m-1)(n-1)$ negative eigenvalues, and (up to scaling) $\rho^\Gamma = X - X_1$, it follows that $\rho^\Gamma$ has at least $(m-1)(n-1)$ negative eigenvalues as well (and hence \emph{exactly} $(m-1)(n-1)$ negative eigenvalues), which completes the proof.
\end{proof}

We note that the procedure to construct $\rho$ described in the proof of Theorem~\ref{thm:neg_evals} is in fact completely constructive and can be carried out efficiently. Indeed, the quantity $c$ and operators $X_1$ and $X_2$ can be found via semidefinite programming, and there are known efficient methods for solving semidefinite programs \cite{GLS93}. For an introduction to semidefinite programming from the perspective of quantum information theory, the reader is directed to \cite{Wat04Lec7}. A semidefinite program that finds $\rho \in M_m \otimes M_n$ such that $\rho^\Gamma$ has $(m-1)(n-1)$ negative eigenvalues is as follows, where $P$ is the orthogonal projection onto the NPT subspace described by Theorem~\ref{thm:nppt_subspace}, and we optimize over $d \in \bb{R}$ and density matrices $\rho \in M_m \otimes M_n$:
\begin{align*}
\begin{matrix}
\begin{tabular}{r l}
\text{maximize:} & $d$ \\
\text{subject to:} & $\rho^\Gamma \leq I - dP$ \\
\ & $\Tr(\rho) = 1$ \\
\ & $\rho \geq 0$ \\
\end{tabular}
\end{matrix}
\end{align*}
It is straightforward to see that any feasible point $\rho$ corresponding to a value of $d > 1$ is such that $\rho^\Gamma$ has $(m-1)(n-1)$ negative eigenvalues, as desired -- the proof of Theorem~\ref{thm:neg_evals} demonstrates that such a value of $d$ exists, since we can choose $d = 1/c$. As an explicit example that arises from making use of this semidefinite program, we now present in the standard basis a density matrix $\rho \in M_3 \otimes M_4$ such that $\rho^\Gamma$ has $6$ negative eigenvalues, beating the best known lower bound of $5$ (we use $\cdot$ to indicate $0$ entries):
\begin{align*}
	\rho := \frac{1}{34}\left[\begin{array}{cccc|cccc|cccc}
	  9     & \cdot & \cdot & \cdot & \cdot & 3     & \cdot & \cdot & \cdot & \cdot & 1     & \cdot \\
		\cdot & 3     & \cdot & \cdot & \cdot & \cdot & 2     & \cdot & \cdot & \cdot & \cdot & 1     \\
    \cdot & \cdot & 1     & \cdot & \cdot & \cdot & \cdot & 1     & \cdot & \cdot & \cdot & \cdot \\
    \cdot & \cdot & \cdot & \cdot & \cdot & \cdot & \cdot & \cdot & \cdot & \cdot & \cdot & \cdot \\ \hline
		\cdot & \cdot & \cdot & \cdot & 2     & \cdot & \cdot & \cdot & \cdot & 1     & \cdot & \cdot \\
    3     & \cdot & \cdot & \cdot & \cdot & 2     & \cdot & \cdot & \cdot & \cdot & 2     & \cdot \\
    \cdot & 2     & \cdot & \cdot & \cdot & \cdot & 2     & \cdot & \cdot & \cdot & \cdot & 3     \\
    \cdot & \cdot & 1     & \cdot & \cdot & \cdot & \cdot & 2     & \cdot & \cdot & \cdot & \cdot \\ \hline
		\cdot & \cdot & \cdot & \cdot & \cdot & \cdot & \cdot & \cdot & \cdot & \cdot & \cdot & \cdot \\
    \cdot & \cdot & \cdot & \cdot & 1     & \cdot & \cdot & \cdot & \cdot & 1     & \cdot & \cdot \\
    1     & \cdot & \cdot & \cdot & \cdot & 2     & \cdot & \cdot & \cdot & \cdot & 3     & \cdot \\
    \cdot & 1     & \cdot & \cdot & \cdot & \cdot & 3     & \cdot & \cdot & \cdot & \cdot & 9     \end{array}\right].
\end{align*}
It is straightforward to verify that $\rho$ is positive semidefinite, yet $\rho^\Gamma$ has negative eigenvalues equal to approximately $-0.0204$, $-0.0159$, and $-0.0105$, each with multiplicity $2$. More generally, a MATLAB script that uses the CVX package \cite{cvx} to solve this semidefinite program, and thus constructs $\rho \in M_m \otimes M_n$ such that $\rho^\Gamma$ has $(m-1)(n-1)$ negative eigenvalues, can be downloaded from \cite{JohNegEvalCode}. To give a rough idea of the speed of this script, the above density matrix in $M_3 \otimes M_4$ was computed in about $0.6$ seconds on a standard desktop computer, while a density matrix $\rho \in M_{13} \otimes M_{13}$ whose partial transpose has $144$ negative eigenvalues takes about $30$ minutes to compute.

In this brief article, we answered the question of how large a subspace $\cl{S} \subseteq \bb{C}^m \otimes \bb{C}^n$ can be such that every density matrix $\rho \in M_m \otimes M_n$ with range contained in $\cl{S}$ is NPT. More specifically, we have shown that such subspaces can have dimension $(m-1)(n-1)$, which is just as large as the largest subspace consisting entirely of entangled states. We then used this result to resolve a long-standing question that asks for the maximum number of negative eigenvalues that the partial transpose of a state $\rho \in M_m \otimes M_n$ can have -- the answer to this question is also $(m-1)(n-1)$.

As a possible extension to this work, it may be worth investigating the number of negative eigenvalues of $(id_m \otimes \Phi)(\rho)$, where $\Phi : M_n \rightarrow M_n$ is a given positive but not completely positive map. It follows from \cite[Corollary~5.4]{JK10} that $(id_m \otimes \Phi)(\rho)$ can not have more than $(m-1)(n-1)$ negative eigenvalues (and more generally, if $\Phi$ is $k$-positive then $(id_m \otimes \Phi)(\rho)$ can not have more than $(m-k)(n-k)$ negative eigenvalues), but for which maps is this bound tight? We have shown that the transpose map is one such example, but are there others of interest?

In another direction, it would be interesting to extend our results to the multipartite setting. It is known that the maximum dimension of a subspace consisting entirely of entangled states in $\bigotimes_{i=1}^p\bb{C}^{d_i}$ is $d_1d_2\cdots d_p - (d_1 + d_2 + \cdots + d_p) + p - 1$ \cite{Par04,Bha06}, and it is now natural to ask whether or not there is an NPT subspace of the same dimension. Note, however, that in the multipartite setting the partial transpose can be taken on many different subsystems, and it is possible that some of a state's partial transposes are positive semidefinite while others are not. Thus the proper question to ask is for the maximum dimension of a subspace with the property that that any state with range in the subspace has at least one partial transpose that is non-positive. However, we are not aware of an answer to this question (one reason for this is that our proof of Theorem~\ref{thm:nppt_subspace} relies largely on matrix-theoretic techniques, yet quantum states are isomorphic to matrices only in the bipartite case).

The author would like to thank Jianxin Chen and John Watrous for helpful conversations. This work was supported by the Natural Sciences and Engineering Research Council of Canada.

\bibliography{../../_bibliographies_/quantum}
\end{document}